\newif\ifLIPICS
\LIPICSfalse

\newif\ifANON
\ANONtrue
\ANONfalse

\ifLIPICS
    \ifANON
         \documentclass[anonymous, english,cleveref,numberwithinsect]{lipics-v2021}
    \else
        \documentclass[a4paper,english,cleveref, autoref, thm-
        restate,numberwithinsect]{lipics-v2021}
    \fi

    \usepackage[T1]{fontenc}
    \usepackage{babel}
    \usepackage{float}
    \usepackage{amsmath}
    \usepackage{amsthm}
    \usepackage{amssymb}

    \usepackage{xcolor}
    \usepackage{graphicx} %
    
    \usepackage{tablefootnote}
    \usepackage{booktabs} %

    \DeclareMathOperator{\poly}{poly}
    \DeclareMathOperator{\polylog}{polylog}

    \newcommand{\M}{\mathcal{M}}

    \newcommand{\eps}{\varepsilon}

    \usepackage{mathtools}
    \usepackage{comment}
    \usepackage{etoolbox}   %
    \makeatletter
    \patchcmd{\thebibliography}{#1}{A99}{}{}   %
\else
    \documentclass[11pt,a4]{article}
    \usepackage{paper}
\fi

\title{Fast Nearest Neighbor Search for $\ell_p$ Metrics} 

\ifLIPICS
    \author{Robert Krauthgamer}{The Harry Weinrebe Professorial Chair of Computer Science, Weizmann Institute of Science, Rehovot, Israel}    {robert.krauthgamer@weizmann.ac.il}{https://orcid.org/0009-0003-8154-3735}
    {Work partially supported by the Israel Science Foundation grant \#1336/23.}

    \author{Nir Petruschka}{Weizmann Institute of Science, Rehovot, Israel}{nir.petruschka@weizmann.ac.il}{https://orcid.org/0009-0005-7051-6330}{}

\authorrunning{R. Krauthgamer and N. Petruschka} %

\acknowledgements{We thank Shay Sapir and Asaf Petruschka for helpful discussions.}

\relatedversion{}\relatedversiondetails
{\textcolor{red}{TODO}}{}

\ccsdesc[500]{Theory of computation~Computational geometry}
\ccsdesc[500]{Theory of computation~Nearest neighbor algorithms}

\keywords{Nearest neighbor search, metric embeddings, $\ell_p$ norm}

\Copyright{Robert Krauthgamer and Nir Petruschka}

\EventEditors{Hee-Kap Ahn, Michael Hoffmann, and Amir Nayyeri} 
\EventNoEds{3} \EventLongTitle{42nd International Symposium on Computational Geometry (SoCG 2026)}
\EventShortTitle{SoCG 2026} 
\EventAcronym{SoCG} \EventYear{2026} 
\EventDate{June 2--5, 2026} \EventLocation{New Brunswick, NJ, USA} \EventLogo{socg-logo.pdf} \SeriesVolume{367}
\ArticleNo{XX}

\else %

\ifANON
\author{Anonymous Authors}
\else
            \author{Robert Krauthgamer%
          \thanks{The Harry Weinrebe Professorial Chair of Computer Science.
          Work partially supported by the Israel Science Foundation grant \#1336/23.
            Email: \texttt{robert.krauthgamer@weizmann.ac.il}
          } 
          \qquad 
        Nir Petruschka\thanks{
            Email: \texttt{nir.petruschka@weizmann.ac.il}
        } 
        \\  Weizmann Institute of Science }
\fi %

\fi %

\begin{document}

\maketitle

\begin{abstract}
The Nearest Neighbor Search (NNS) problem asks to design a data structure that preprocesses an $n$-point dataset $X$ lying in a metric space $\M$,
so that given a query point $q \in \M$, one can quickly return a point of $X$
minimizing the distance to $q$.
The efficiency of such a data structure is evaluated primarily by the amount of space it uses and the time required to answer a query.
We focus on the fast query-time regime,
which is crucial for modern large-scale applications,
where datasets are massive and queries must be processed online,
and is often modeled by query time $\poly(d \log n)$.
Our main result is such a randomized data structure for NNS in $\ell_p$ spaces, $p>2$, 
that achieves $p^{O(1) + \log\log p}$ approximation with fast query time and $\poly(dn)$ space.
Our data structure improves, or is incomparable to, %
the state-of-the-art for the fast query-time regime 
from [Bartal and Gottlieb, TCS 2019] and [Krauthgamer, Petruschka and Sapir, FOCS 2025]. 
\end{abstract}

\section{Introduction}

The Nearest Neighbor Search (NNS) problem asks to design a data structure (also called a scheme) that preprocesses an $n$-point dataset $X$ lying in a metric space $\M$,
so that given a query point $q \in \M$, one can quickly return a point of $X$
minimizing the distance to $q$ (or approximately minimizing it in the approximate version).
The efficiency of such a data structure is evaluated primarily by the amount of space it uses and the time required to answer a query. The preprocessing time is a secondary measure and is usually comparable to the space usage.
Because of its central role in areas such as machine learning,
data analysis, and information retrieval,
NNS has been the subject of extensive research, both practical and theoretical
(see, e.g., the surveys~\cite{AndoniIndyk17, AIR19}).

It is well known that approximate NNS can be reduced to solving
$\polylog(n)$ instances of the approximate \emph{near} neighbor problem~\cite{HIM12}.  
For this reason, we restrict attention to the latter.
\begin{definition} %
The Approximate Near Neighbor problem for a metric space \((\M, d_\M)\)
and parameters $c \geq 1$, $r>0$, abbreviated \emph{\((c,r)\)-ANN},
is the following. 
Design a data structure that preprocesses an \(n\)-point subset \(X \subseteq \M\), 
so that given a query \(q \in \M\) with $d_\M(q,X) \leq r$,%
\footnote{If $d_\M(q,X) > r$, it may report anything,
  where as usual, $d_\M(q,X)\coloneqq \min_{x^* \in X} d_\M(x^*, q)$. 
}
it reports \(x \in X\) such that
\[
  d_\M(q, x) \leq cr.
\]
In a randomized data structure,
the reported $x\in X$ satisfies this with probability at least $2/3$. 
\end{definition}

We focus on the fast query-time regime,
which is crucial for modern large-scale applications where datasets are massive and queries must be processed online,
and is often modeled by query time $\poly(d \log n)$. 
In $\ell_p$ spaces, ANN in this regime is well understood for $1 \leq p \leq 2$~\cite{IM98, KOR00, HIM12} and for $p=\infty$~\cite{Indyk01_ell_infty, ACP08, KP12a}. 
For $2<p<\infty$, the situation is less clear: 
there exists a handful of data structures, each suitable for a different range of $p$, 
as detailed in \cref{tbl:ANN-Comparision}. 
We present a new scheme for ANN in $\ell_p$, $p>2$, with fast query time, 
that offers an improved tradeoff between approximation and space, as follows.

\begin{theorem}
\label{thm:ANN-for-l_p}
Let $p>2$, $d\ge 1$.
Then for every 
$r>0$,
there is a randomized data structure for $(c,r)$-ANN in $\ell_p^d$, 
where $c=p^{O(1) 
+ \log\log p}$, 
that has query time $\poly(d\log n)$ 
and preprocessing (both space and time) 
$\poly(dn)$.%
\footnote{The time and space bounds are independent of $p$.}
\end{theorem}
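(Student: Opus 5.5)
We reduce ANN in $\ell_p^d$ to ANN in $\ell_\infty$ of moderate dimension and pay a $p^{O(\log\log p)}$ factor only through an iterated norm-interpolation step. The approach combines three ingredients. The first is a randomized embedding of $\ell_p$ into $\ell_\infty$ via max-stability: scale coordinates by $E_i^{-1/p}$ with $E_i$ i.i.d. exponentials. This preserves distances from a fixed query up to constant factors with constant probability, but it is only "one-sided" for the whole dataset. The second is Indyk's $\ell_\infty$ ANN scheme, with approximation $O(\log_{1+\rho}\log d)$ and space $n^{1+\rho}$ (\cite{Indyk01_ell_infty}). The third is a bucketing/decomposition of the coordinates that avoids a $\log d$ loss depending on $d$.

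\textbf{Step 1: reduce the aspect of $d$.} Using the known fact that $\ell_p^d$ norms are within factor $d^{1/p}$ of $\ell_\infty^d$, we first handle the regime $d\le p^{O(p)}$, in which $d^{1/p}\le p^{O(1)}$. Here $\ell_p$ is a $p^{O(1)}$-approximation of $\ell_\infty$, and we apply Indyk's $\ell_\infty$ scheme with $\rho$ constant, paying $O(\log\log d)=O(\log p)$. Total approximation is $p^{O(1)}$ with $\poly(dn)$ space and fast query time.

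\textbf{Step 2: general $d$ via dimension/magnitude splitting.} For large $d$, we write the $\ell_p$ distance in terms of coordinate differences at geometric scales. Following the Bartal--Gottlieb style approach, for the query radius $r$ we use randomized max-stable scaling (with $\poly(d\log n)$ query evaluation). We then use a product/iterated construction, as in \cite{Indyk_product}, that treats $\ell_p^d$ as an $\ell_p$-product of $\ell_p^{d'}$ blocks with $d'=p^{O(p)}$. Indyk's scheme for $\ell_p$-products (and for max-products) loses $O(\log\log n)$-type factors per level. Balancing the levels so that there are $O(\log p)$ levels, each losing a factor $p^{O(1/\log p)}\cdot O(\log p)$, or equivalently a multiplicative $O(\log p)$ per level on top of a $p^{O(1)}$ base, gives total $p^{O(1)}\cdot(\log p)^{O(\log p)}=p^{O(1)+O(\log\log p)}$. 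This matches the stated bound.

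\textbf{Step 3: amplification and bookkeeping.} We boost the success probability to $2/3$ by $O(1)$ independent repetitions, checking candidates against the query in $O(d)$ time each. We also verify that space stays $\poly(dn)$ independently of $p$, by choosing $\rho$ constant in every invocation of the $\ell_\infty$/product scheme so that each level multiplies the space by at most a fixed polynomial. Since the number of levels depends on $p$, we must instead cap the levels so that the product of the exponents is $O(1)$. For example, we use $\rho_j=2^{-j}$ at level $j$; the resulting larger $\log_{1+\rho_j}$ losses are what produce the $\log\log p$ in the exponent.

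\textbf{Main obstacle.} I expect the main obstacle to be Step 2: making the randomized $\ell_p\to\ell_\infty$ scaling work simultaneously for all dataset points. Non-contraction for the near neighbor and non-expansion for far points must hold with good probability without union bounding over $n$ points, which costs $\log n$. I would first try the standard asymmetric analysis. Far points only need to not become close, and this holds with probability $1-1/\poly(n)$ when the scaling threshold is set at $p^{O(1)}\cdot r$, using tail bounds of exponentials raised to $1/p$. The near point needs only constant probability.
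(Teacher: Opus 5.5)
\textbf{There is a genuine gap, in Step 2 and its supporting analysis.} Your Step 1 is sound: for $d\le p^{O(p)}$ you have $d^{1/p}=p^{O(1)}$ and $\log\log d=O(\log p)$. The general case, however, rests on the ``asymmetric analysis'' in your last paragraph, and that claim is false. By max-stability, $\max_i |x_i-q_i|E_i^{-1/p}$ is distributed exactly as $\|x-q\|_p E^{-1/p}$. So a far point with $\|x-q\|_p=cr$ lands within embedded distance $r$ with probability $\Pr[E\ge c^p]=e^{-c^p}$.

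This is $1/\poly(n)$ only when $c^p\ge\Omega(\log n)$, i.e.\ $c\ge\Omega((\log n)^{1/p})$. For $c=p^{O(1)}$ and, say, $p=3$, it is a constant independent of $n$. A dataset with many points just beyond distance $cr$ then has $\Theta(n)$ of them collapse, and the $\ell_\infty$ structure may legitimately return any of them. Independent repetitions and $O(d)$-time candidate checks do not help, because every repetition contains collapsed far points. This $(\log n)^{1/p}$ loss is exactly the bound of \cite{AIK09, Andoni09_thesis} that the theorem is designed to beat.

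Your other ingredients also carry losses that no parameter choice removes. Indyk's $\ell_\infty$ scheme loses $O(\log_{1+\rho}\log d)$, and the product-metric scheme loses $O(\log\log n)$ per level, as you note yourself. Yet your final bound contains neither $n$ nor $d$.

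Step 3 does not close either. With $\rho_j=2^{-j}$ the loss at level $j$ is about $2^j$ times a $\log\log$ factor, and $\prod_{j\le O(\log p)}2^j=p^{\Theta(\log p)}$, not $p^{O(\log\log p)}$. In nested constructions the space factors multiply, so the exponents add rather than multiply. With $O(\log p)$ levels this gives $n^{\Theta(\log p)}$, which is precisely the blow-up of \cite{KPS25}. Finally, nothing explains why $\ell_p^d$ should decompose into $O(\log p)$ product levels at all.

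\textbf{What the paper does instead.} The missing idea is a reduction with no union bound over the dataset, whose distortion depends only on the ratio between the diameter of the searched set and $r$. The paper gets this from the deterministic Mazur map $\ell_p\to\ell_{p/2}$. Given a $c_{\text{base}}$-approximate answer $x_{\text{base}}$, the true neighbor lies in the sparse-cover cluster of $x_{\text{base}}$, which has diameter at most $2\beta c_{\text{base}}r$. Mapping that cluster into $\ell_{p/2}$ gives $c_{\text{new}}=\sqrt{8\beta c_{p/2}c_{\text{base}}}$.

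Iterating $O(\log\log d)$ times from Chan's $\poly(d)$-approximation washes out all dependence on $d$ and yields $c_p\le 16\beta c_{p/2}$. Recursing through $\ell_{p/2},\ell_{p/4},\dots,\ell_2$ then gives $(16\log p)^{\log p}=p^{4+\log\log p}$. Space stays polynomial because the lower-level structures are built on a $(\log p,2c_{\text{base}}r)$-sparse cover of sparsity $O(n^{1+1/\log p})$, not once per data point. Each of the $\log p$ levels therefore multiplies the space by only $n^{1/\log p}$.
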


Studying ANN in $\ell_p$, $p>2$, is important both practically and theoretically.
Real-world data and applications may motivate norms 
that emphasize outliers, e.g., for anomaly detection, 
or alter the presence of ``hubs'', e.g., to affect classification;
such data structures were indeed used for time-series classification~\cite{YYW11},
see \cite{BG19} for additional references. 
From a theoretical perspective, the geometry of $\ell_p$ spaces 
undermines existing algorithmic techniques and requires developing new ones. 
A key challenge is to bridge between $p=2$ and $p=\infty$.
In the fast query-time regime, this means interpolating between 
the classical $(O(1), r)$-ANN in $\ell_2$~\cite{HIM12} 
and the $(O(\log\log{d}), r)$-ANN in $\ell_\infty$~\cite{Indyk01_ell_infty}, 
which both use only $\poly(dn)$ space. 
It is natural to conjecture that $2<p<\infty$ exhibits an interpolation between these two guarantees, 
and since $\ell_{\infty}^d$ is $O(1)$-equivalent to $\ell_{\log d}^d$ by Hölder's inequality, 
this interpolated data structure is conjectured 
to achieve $O(\log p)$ approximation using $\poly(nd)$ space. 
The first step towards this conjecture, in \cite{AIK09, Andoni09_thesis}, 
devised a reduction from $\ell_p$ to $\ell_\infty$, 
and obtained a data structure with $O(\log^{1/p} n \log\log d)$-approximation, 
which is mainly suited for large values of $p$. 
A nontrivial reduction, devised in \cite{BG19},
reduced $\ell_p$ to $\ell_2$ and obtains $2^{O(p)}$-approximation, 
which is a major improvement for small values of $p$, 
although it is doubly-exponentially worse than the conjecture.
A more sophisticated reduction, that was devised recently in \cite{KPS25}, 
achieves approximation $\poly(p)$, which is an exponential improvement.
However, it goes through multiple intermediate $\ell_t$ spaces ($2 \leq t < p$) via a recursive argument
that increases the space to $n^{O(\log p)}$, much higher than conjectured. 
Our \cref{thm:ANN-for-l_p} essentially completes the improvement of \cite{KPS25}, 
by decreasing the space complexity back to the conjectured $\poly(dn)$, 
albeit slightly increasing the approximation to $p^{\log\log p}$.
In particular, it resolves a question posed in \cite{KPS25}, 
of whether the recursion can avoid this higher space complexity.

\begin{table}[t]
\begin{center}
\begin{tabular}{lll}
\midrule
Approximation  & Space & Reference  \\
\midrule
\midrule
$O(\log^{1/p}{n}\log\log d)$  & $\poly(nd)$ & \cite{AIK09, Andoni09_thesis}  \\
$2^{O((\log{d})^{2/3}(\log\log{d})^{1/3})}$  & $\poly(nd)$ & \cite{ANNRW18b}  \\
$2^{O(p)}$     & $\poly(nd)$ & \cite{BG19} \\
$p^{O(1)}$     & $n^{O(\log p)}$ & \cite{KPS25}  \\
$c$     &  $n^{ O(p/c) \cdot \log\log n}$ & \cite{BBMWWZ24}  \\
$p^{O(1) + \log\log p}$    & $\poly(nd)$     & \cref{thm:ANN-for-l_p} \\
\midrule
\end{tabular}
\end{center}
\caption{Known data structures for ANN in $\ell_p$, $p>2$, in the fast query time regime.}
\label{tbl:ANN-Comparision}
\end{table}

The proof of \cref{thm:ANN-for-l_p} is based on 
a simple yet powerful enhancement of the known NNS schemes from \cite{BG19, KPS25}, 
which utilizes classical results from \cite{AP90, ABCP98}  about sparse covers. 
We provide an overview of the algorithms of \cite{BG19, KPS25} in \cref{sec:Overview}, 
along with an intuitive explanation of our approach at the beginning of \cref{sec:ANN}.

\subsection{Related Work}
Many of the existing results on approximate nearest neighbor search in $\ell_p$ spaces
focus on the case $1 \leq p \leq 2$~\cite{KOR00,IM98,HIM12, DIIM04, AI06, Andoni09_thesis, Ngu13}.
In this setting, $O(1)$-approximation can be achieved with $\poly(dn)$ preprocessing (space and time),
and query time polynomial in $d \log n$~\cite{KOR00, HIM12}.

In recent years, significant progress has been made for the case $p \in (2,\infty)$,
and current results can be broadly divided into three regimes.  
The first regime consists of data structures that achieve
moderate approximation and query time using near-linear space~\cite{Andoni09_thesis, ANNRW18a, ANNRW18b, ANRW21, KNT21, AN25}. 
The second regime has small approximation factor, say $O(1)$ or even $1+\eps$,
in which case both the query time and the preprocessing requirements (space and time)
are typically very large~\cite{BG19, BBMWWZ24}.
The third regime, which is the focus of our work,
has fast query time, namely, $\poly(d \log n)$,
and existing results either
achieve $2^{O(p)}$-approximation with $\poly(dn)$ space~\cite{BG19},
or better approximation $o(2^p)$ at the cost of a much bigger $n^{\omega(1)}$ space~\cite{BBMWWZ24, KPS25}.
Our result is the first to obtain both $o(2^p)$ approximation and $\poly(dn)$ space. 
\cref{tbl:ANN-Comparision} provides a comparison of all known data structures in this regime.

We point out that the recursive embedding technique from \cite{KPS25}, which is used here, 
has been applied successfully also to other problems involving $\ell_p$ spaces,
such as the construction of Lipschitz decompositions~\cite{KP25, KPS25, NR25},
geometric spanners~\cite{KP25, KPS25},
and low-distortion embeddings~\cite{KPS25, NR25}.

\section{Preliminaries}
Given a metric space $\M=(X,d_\M)$, we denote by $B_{d_\M}(x,r) \coloneqq \{y \in X : d_\M(x,y)\leq r\}$ 
the ball of radius $r>0$ centered at a point $x \in \M$.

For every \( p, q \in [1, \infty) \),
the Mazur map  $M_{p,q}: \ell^d_p \to \ell^d_q$ is computed by taking,
in each coordinate, the absolute value raised to power \( p/q \),
but keeping the original sign.
Our algorithm crucially relies on the following property of this map.

\begin{theorem} [\cite{benyamini1998geometric,BG19}]\label{thm:mazur_map}
Let $1 \leq q < p < \infty$ and $C_0>0$,
and let $M$ be the Mazur map $M_{p,q}$ scaled down by factor $\frac{p}{q}{C_0}^{p/q-1}$.
Then for all $x,y\in\ell_{p}^d$ such that $x,y \in B_{\ell_p}(0,C_0)$,
\[
  \tfrac{q}{p} (2 C_0 )^{1 - p/q} ||x - y||_{p}^{p/q}
  \leq ||M(x) - M(y)||_{q}
  \leq ||x - y||_{p} .
\]
\end{theorem}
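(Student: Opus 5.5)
The plan is to reduce the statement to a scalar inequality, applied coordinate by coordinate, and then sum over coordinates. Write $\alpha = p/q > 1$ and $\phi(t) = \mathrm{sign}(t)|t|^{\alpha}$, so that $M_{p,q}(x)_i = \phi(x_i)$. After the scaling by $\lambda \coloneqq \alpha C_0^{\alpha-1}$, we need to show
\[
\tfrac{1}{\alpha}(2C_0)^{1-\alpha}\|x-y\|_p^{\alpha} \le \lambda^{-1}\Bigl(\sum_i |\phi(x_i)-\phi(y_i)|^q\Bigr)^{1/q} \le \|x-y\|_p .
\]
The two elementary scalar facts I will use are the following. The first is a Lipschitz-type bound,
\[
|\phi(a)-\phi(b)| \le \alpha\, |a-b|\,\max(|a|,|b|)^{\alpha-1},
\]
which follows from the mean value theorem, since $\phi'(t)=\alpha|t|^{\alpha-1}$. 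The second is a reverse bound,
\[
|\phi(a)-\phi(b)| \ge 2^{1-\alpha}|a-b|^{\alpha},
\]
which is a standard inequality for the odd power map with $\alpha \ge 1$.

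For the \textbf{upper bound}, I apply the Lipschitz bound in each coordinate to get
\[
\sum_i |\phi(x_i)-\phi(y_i)|^q \le \alpha^q \sum_i |x_i-y_i|^q\, m_i^{q(\alpha-1)}, \qquad m_i=\max(|x_i|,|y_i|).
\]
Next I use Hölder's inequality with exponents $p/q$ and $p/(p-q)$. Since $q(\alpha-1)\cdot p/(p-q)=p$, this gives
\[
\sum_i |x_i-y_i|^q m_i^{p-q} \le \|x-y\|_p^{q}\,\Bigl(\sum_i m_i^p\Bigr)^{(p-q)/p}.
\]
Then $\sum_i m_i^p \le \|x\|_p^p+\|y\|_p^p \le 2C_0^p$. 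This introduces a factor $2^{(p-q)/(pq)}$ after taking $q$-th roots. This is the step I expect to be the real obstacle, because the stated normalization $\lambda=\alpha C_0^{\alpha-1}$ seems to require $\bigl(\sum_i m_i^p\bigr)^{1/p}\le C_0$, not $2^{1/p}C_0$. I would first try to remove the factor by refining the scalar bound to use an average rather than a max. Concretely, I would use the integral form $\phi(a)-\phi(b)=\alpha(a-b)\int_0^1 |b+s(a-b)|^{\alpha-1}\,ds$ and apply Hölder/Minkowski to the path $z_s=y+s(x-y)$. Each $z_s$ lies in $B_{\ell_p}(0,C_0)$ by convexity, which yields exactly
\[
\|M(x)-M(y)\|_q \le \lambda^{-1}\alpha\sup_s \|x-y\|_p\,\|z_s\|_p^{\alpha-1} \le \|x-y\|_p .
\]
Concretely, the bound on $\|\phi(x)-\phi(y)\|_q$ comes from Minkowski's integral inequality followed by Hölder in each $s$.

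For the \textbf{lower bound}, I apply the reverse scalar inequality coordinatewise:
\[
\|\phi(x)-\phi(y)\|_q^q \ge 2^{(1-\alpha)q}\sum_i |x_i-y_i|^{p} = 2^{(1-\alpha)q}\|x-y\|_p^{p}.
\]
So $\|\phi(x)-\phi(y)\|_q \ge 2^{1-\alpha}\|x-y\|_p^{\alpha}$. Dividing by $\lambda$ gives
\[
\tfrac{1}{\alpha}\,2^{1-\alpha}C_0^{1-\alpha}\|x-y\|_p^{\alpha} = \tfrac{q}{p}(2C_0)^{1-p/q}\|x-y\|_p^{p/q},
\]
which is exactly the claimed bound.

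It remains to verify the scalar reverse inequality. When $a,b$ have the same sign, say $a\ge b\ge 0$, the claim $a^\alpha-b^\alpha\ge (a-b)^\alpha$ follows from superadditivity of $t^\alpha$. When the signs are opposite, the claim $|a|^\alpha+|b|^\alpha\ge 2^{1-\alpha}(|a|+|b|)^\alpha$ follows from convexity of $t^\alpha$ (the power mean inequality). The only nontrivial point in the whole proof is the upper bound's constant, which the path/integral argument handles.
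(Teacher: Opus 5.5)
Your proof is correct, including the exact constants. The paper gives no proof of its own; it imports the statement from \cite{benyamini1998geometric,BG19}. Your argument is essentially the standard one behind that citation.

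For the upper bound, you use the integral form of the mean value theorem along the segment $z_s=y+s(x-y)$, which stays in $B_{\ell_p}(0,C_0)$ by convexity. You then apply the triangle inequality for the integral and H\"older with $\frac1q=\frac1p+\frac{p-q}{pq}$, which gives $\|(x-y)\,|z_s|^{p/q-1}\|_q\le \|x-y\|_p\,\|z_s\|_p^{p/q-1}$. For the lower bound, you sum the scalar inequality $|\phi(a)-\phi(b)|\ge 2^{1-p/q}|a-b|^{p/q}$ over coordinates.

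In a write-up, drop the $\max$-based first attempt. As you noticed, it loses a factor $2^{(p-q)/(pq)}$ and cannot give the stated normalization; only the path argument does. It is also worth remarking that your lower bound holds for all $x,y\in\ell_p^d$, since $C_0$ enters it only through the scaling factor $\frac{p}{q}C_0^{p/q-1}$.
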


\section{Overview of Algorithms from \cite{BG19, KPS25}} \label{sec:Overview}

In this section, we review the algorithms of \cite{BG19} and \cite{KPS25} for ANN in $\ell_p$, $p>2$.
For the rest of the section, fix a dataset $X \subset \ell_p^d$ with $|X|=n$ for some $p>2$.

\subsection{$(2^{O(p)}, r)$-ANN with $\poly(dn)$ Space \cite{BG19}}
\label{sec:Overview_BG19}

In the preprocessing stage, consider a set of $k=\frac{p}{2}\cdot O(\log\log d)$ possible approximation factors $\mathcal{\hat C}=\{\hat c_i\}_{i=0}^{k}$, where $\poly(d)=\hat c_0 \geq \hat c_1 \geq \dots \geq \hat c_k = 2^{O(p)}$.
First, compute for $X$ an initial NNS data structure $A_{\text{init}}$ using \cite{Chan98},
which provides approximation $\hat c_0=\poly(d)$
using query time $\poly(d\log{n})$ and space $\poly(d)\Tilde{O}(n)$.%
\footnote{Throughout, the notation $\tilde{O}(f)$ hides factors that are logarithmic in $f$.}
Then, for every data point $x\in X$ and every approximation $\hat c \in \mathcal{\hat C}$,
compute a (scaled) Mazur map $M_{x,\hat c}: \ell_p^d\to \ell_2^d$ for the points in the set $B_{\ell_p}(0, \hat cr) \cap (X-x)$.
Finally, compute for the image points in $\ell_2^d$
the $(2, r)$-ANN data structure $A_{x,\hat c}$ from \cite{HIM12},
which uses $\poly(d\log{n})$ query time and $\poly(dn)$ space.
We have this data structure for each point $x \in X$ and each approximation factor $\hat c \in \mathcal{\hat C}$,
and clearly $|B_{\ell_p}(x, \hat cr)| \leq n$,
hence the total space requirement is $O(p \cdot \log\log d) n \cdot \poly(dn) = \poly(dn)$.

At query time, given a query point $q \in \ell_p^d$,
find a $\hat c_0$-approximate solution $x_0$ using $A_{\text{init}}$.
The crucial observation is that since the Mazur map ensures a distortion
that depends on the diameter of the point set (\cref{thm:mazur_map}),
the answer from $A_{x_0. \hat c_0}$ is a $\hat c_1$-approximate solution $x_1$.
Applying this procedure iteratively, the approximation factor decreases even faster than geometrically,
roughly as $\hat c_i = \hat c_{i-1}^{1-2/p}$. 
Hence, after $k=O(p \log\log{c_0})$ iterations 
we obtain an approximate solution $x_k$ with $\hat c_k = 2^{O(p)}$, 
where the approximation factor does not improve further.  

\subsection{$(\poly(p), r)$-ANN with $\poly(d)n^{O(\log p)}$ Space \cite{KPS25}}

In \cite{KPS25}, the image space of the Mazur map is changed from $\ell_2$ to $\ell_t$ for general $1 \leq t < p$.
Generalizing the results from \cite{BG19},
a $(c_t, r)$-ANN data structure in $\ell_t^d$ with query time $Q(n)$ and space $S(n)$
is used to construct a $(2^{O(p/t)} \cdot c_t, r)$-ANN data structure for $\ell_p^d$
with query time $O(d) + \frac{p}{t}O(\log\log d)Q(n)$ and space $\frac{p}{t}O(\log\log d) n \cdot S(n)$.
Using the above result with $t=p/2$, and applying it recursively to decrease $p$ to $2$ 
(which is actually a double recursion, because we also iterate over the $\hat c_i$'s), 
yields a $(\poly(p), r)$-ANN data structure with query time $\poly(d\log{n})$.
The caveat is that every application of the recursive step 
multiplies the space of the data structure by factor $n$, 
which yields a data structure with space $\poly(d)n^{O(\log p)}$.

\section{$(2^{\tilde{O}(\log p)}, r)$-ANN with $\poly(dn)$ Space} \label{sec:ANN}

In this section, we give the proof of \cref{thm:ANN-for-l_p}. 
We first explain the intuition, 
and for simplicity we restrict this discussion to reducing the space requirement of \cite{BG19}; 
reducing the space requirements of \cite{KPS25} is similar in spirit, although more technical.

Revisiting \cref{sec:Overview_BG19}, the final solution $x_k$ is obtained 
by finding iteratively a sequence of intermediate solutions $x_0, x_1, \ldots, x_{k-1}$. 
Each iteration $i<k$ makes progress by finding a point $x_i$ 
and restricting the search region to $B_{\ell_p}(x_i, \hat c_{i}r)$, which has bounded diameter, 
and thus applying a Mazur map on this region has distortion guarantees. 
It follows that querying the data structure $A_{x_i,\hat c_i}$ (computed over $B_{\ell_p}(x_i,\hat c_i r) \cap X$) 
finds a point $x_{i+1}$ and we can restrict the search region even further, 
to diameter $\hat c_{i+1}r$. 

The preprocessing phase prepares for the possibility that 
each point $x \in X$ will serve (at query time) the $\hat c_i$-approximate solution,
i.e., the search region will be restricted to $B_{\ell_p}(x, \hat c_i r)$. 
To make progress and restrict the search region even further, 
a data structure $A_{x,\hat c_i}$ is constructed for (the points in) this region.
Our key idea in \cref{thm:ANN-for-l_p} is that, 
rather than preparing a \emph{separate} data structure for each search region, 
the algorithm constructs one \emph{global} collection of data structures 
that together cover all the possible search regions.
For every $\hat c_i$, the algorithm constructs a set of ANN data structures 
computed on a collection of subsets $\mathcal{S} \subseteq 2^{X}$, such that 
for every point $x \in X$ there is some $S \in \mathcal{S}$ 
that contains the search region $B_{\ell_p}(x, \hat c_i r) \cap X$.
In addition, every $S\in\mathcal{S}$ has diameter at most $\beta \hat c_i r$ for some $\beta>1$. 
We also want the total number of points in $\mathcal{S}$ (counting repetitions) to be small. 
The preprocessing algorithm simply stores for every $x \in X$ a reference to 
a set $S_x \in \mathcal{S}$ with $B_{\ell_p}(x, \hat c_i r) \cap X \subseteq S_x$, 
and at query time, if $x$ serves as a $\hat c_i$-approximate solution, 
the algorithm queries the ANN data structure constructed for $S_x$.
Since $S_x$ has a diameter at most $\beta \hat c_i r$, 
this will still cause the search region's diameter to shrink in the next iteration (although by a slightly smaller factor). 
Since the total number of points in $\mathcal{S}$ is small, 
the total memory used by all the ANN data structures will be small too. 

It remains to show that the preprocessing phase can indeed find efficiently 
a collection of subsets of $X$ with the above properties. 
Fortunately, this was shown to be possible in \cite{AP90, ABCP98},
and has become a fundamental algorithmic tool with numerous applications 
in distributed computing, network design, routing, graph algorithms, and metric embeddings.

\begin{definition}[Sparse Neighborhood Cover~\cite{AP90, ABCP98}] 
A \emph{$(\beta,r)$-sparse cover} of a metric space $\M = (X,d_\M)$ 
is a collection of subsets (called clusters) $\mathcal{S} \subseteq 2^{X}$, 
each of diameter at most $\beta r$, 
such that for every $x \in X$ there exists $S \in \mathcal{S}$ with $B_{d_\M}(x,r) \subseteq S$.
The total number of points $\sum_{S\in\mathcal{S}}|S|$ is called the \emph{sparsity} of $\mathcal{S}$. 
\end{definition}

\begin{theorem}[\cite{ABCP98}]\label{thm:Sparse-Cover} 
There is an algorithm that, given an $n$-point metric space $\M$ and parameters $\beta>1$ and $r > 0$, 
outputs a $(\beta, r)$-sparse cover of $\M$ of sparsity $O(n^{1+1/\beta})$, 
and runs in $O(\beta n^{2+2/\beta})$ time. 
\end{theorem}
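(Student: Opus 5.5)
The plan is to follow the classical ball-growing (coarsening) construction of Awerbuch and Peleg~\cite{AP90, ABCP98}. Fix an integer parameter $k\ge 1$, to be tied to $\beta$ at the end. The algorithm maintains the set $U\subseteq X$ of points whose ball $B_{d_\M}(x,r)$ is not yet contained in an output cluster. It proceeds in phases. At the start of each phase it sets $R\leftarrow U$, the points still available as seeds in that phase.

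Inside a phase, the algorithm repeatedly picks a seed $v\in R$ and grows a ``kernel'' layer by layer. It sets $Y_0=\{v\}$ and $Z_j=\bigcup_{x\in Y_j}B_{d_\M}(x,r)$. The next layer is $Y_{j+1}$, the set of points of $R$ whose ball meets $Z_j$. Growth continues while $|Z_j|>n^{1/k}|Y_j|$ (measured in the appropriate count). The cluster $Z=Z_j$ is output at the first $j$ where this fails. Then the points of $Y_j$ are removed from $U$, since their balls lie in $Z$, and all points of $R$ whose ball meets $Z$ are removed from $R$. Each successful step multiplies the count by more than $n^{1/k}$, and counts never exceed $n$. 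Hence growth stops after at most $k$ layers. Every layer extends the radius around $v$ by at most $2r$, so every cluster has diameter $O(k)\,r$.

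\textbf{Sparsity.} Within one phase the kernels $Y$ are pairwise disjoint, because later seeds never have balls meeting an earlier cluster. Together with the stopping rule $|Z|\le n^{1/k}|Y|$, this gives total size at most $n^{1/k}\cdot n$ per phase. The phases must then be controlled: a point stays in $U$ only if its ball met a cluster without being absorbed into a kernel. The standard counting shows that the number of surviving seeds drops geometrically, so the number of phases is $O(k)$ or better. Summing over phases gives sparsity $O(k\, n^{1+1/k})$.

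\textbf{Running time and parameters.} Each growth step costs $O(n^2)$ distance evaluations in the worst case. Charging these to the kernel sizes, which are bounded as in the sparsity argument, gives the stated $O(\beta n^{2+2/\beta})$ time. To obtain the theorem, choose $k=\Theta(\beta)$ so that the diameter bound $O(k)r$ is at most $\beta r$. The constant-factor loss in the exponent $1/k$ versus $1/\beta$ then has to be absorbed. This is done by a more careful accounting, namely growing by $r$ rather than $2r$ per layer when measuring with respect to kernel points, as in~\cite{ABCP98}. For the application in this paper, sparsity $n^{1+O(1/\beta)}$ with $\beta$ constant already suffices.

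\textbf{Main obstacle.} The main obstacle is the cross-phase sparsity accounting. The per-phase disjointness of kernels and the geometric decay of the number of phases must be arranged so that the bound is $n^{1+1/\beta}$ rather than an extra factor of $k$ or a worse exponent. I would first try the potential argument of~\cite{ABCP98}, which bounds, for each point, the number of clusters containing it.
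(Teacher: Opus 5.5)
The paper gives no proof of this statement; it only quotes it from Awerbuch, Berger, Cowen and Peleg. So your reconstruction has to stand on its own, and it has a genuine gap exactly where you place the main obstacle.

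You are running the phase-based (maximum-degree) variant of the Awerbuch--Peleg construction, but the stopping rule and the phase count do not work.

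As written, ``grow while $|Z_j|>n^{1/k}|Y_j|$'' compares a number of points in a union of balls with a number of centers. The $k$-layer bound needs one and the same quantity to grow by $n^{1/k}$ from layer to layer. $Y_{j+1}$ contains only points of $R$, while $Z_j$ may consist mostly of points already deleted from $R$ (covered in an earlier phase, or within distance $r$ of an earlier cluster). So a successful step need not enlarge $Y$ at all: if no further point of $R$ lies within distance $r$ of $Z_j$, then $Y_{j+1}=Y_j$, $Z_{j+1}=Z_j$, and the loop never stops.

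The cross-phase accounting also fails. A phase covers only the kernels $Y_j$ but deletes the larger sets $Y_{j+1}$ from $R$. Even with the actual Awerbuch--Peleg rule $|Y_{j+1}|\le n^{1/k}|Y_j|$, a phase is only guaranteed to cover about an $n^{-1/k}$ fraction of the centers still uncovered. That gives on the order of $kn^{1/k}$ phases, not $O(k)$, and nothing drops by a constant factor per phase.

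With your rule not even this holds, since $|Y_{j+1}|$ is not controlled by $|Z_j|$. For example, if $B(v,r)=\{v,z\}$ and $z$ has $\Theta(n)$ further points of $R$ at distance $r$, growth stops at once, covers only $v$, and deletes $\Theta(n)$ points from $R$. Note also that the clusters output within one phase are pairwise disjoint, so each phase contributes at most $n$. The corrected phase-based argument therefore gives $O(kn^{1+1/k})$, i.e.\ the degree bound times $n$, not $n^{1+1/k}$.

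The missing idea is that bounding $\sum_S|S|$ needs no phases at all. Use Awerbuch and Peleg's single-pass \emph{average} cover, which outputs the outer layer and deletes every center it touches:
\begin{itemize}
\item Let $R$ be the set of centers whose balls are not yet covered.
\item From $v\in R$ put $Y_0=\{v\}$, $W_j=\bigcup_{x\in Y_j}B(x,r)$ and $Y_{j+1}=\{x\in R: B(x,r)\cap W_j\neq\emptyset\}$.
\item Stop at the first $j$ with $|W_{j+1}|\le n^{1/k}|W_j|$, output $W_{j+1}$, and delete all of $Y_{j+1}$ from $R$ (each of their balls lies in $W_{j+1}$).
\end{itemize}
Every center left in $R$ has a ball disjoint from $W_j$. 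Hence the kernels $W_j$ of different seeds are pairwise disjoint, and the sparsity is at most $n^{1/k}\sum|W_j|\le n^{1+1/k}$. Because the same quantity $|W_j|$ is compared in consecutive layers, there are at most $k$ expansions. So each output cluster has radius at most $(2k+1)r$ around its seed and diameter at most $(4k+2)r$, and a direct implementation runs in $O(kn^{2+1/k})$ time.

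Finally, the loss in the exponent cannot be absorbed as you suggest: growing by $r$ per layer would still give diameter about $2kr$. In fact the bound with exactly $n^{1+1/\beta}$ is not true for every $\beta>1$. For $\beta<2$, take points $a,x,b$ with $d(a,x)=d(x,b)=r$ and $d(a,b)=2r$; then $B(x,r)$ lies in no set of diameter $\beta r$.

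What this argument proves, for $\beta\ge 6$ with $k=\lfloor(\beta-2)/4\rfloor$, is sparsity $n^{1+O(1/\beta)}$. That is all the paper's space analysis needs. There $\beta=\log p$ is not a constant, but the $\log p$ recursion levels then add $O(1)$ instead of $1$ to the exponent of $n$.
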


We are now ready to prove \cref{thm:ANN-for-l_p}, 
largely following the proof structure of \cite[Theorem 1.8]{KPS25}. 

\begin{proof}[Proof of \cref{thm:ANN-for-l_p}]
Let $X \subset \ell_{p}^d$ be an $n$-point dataset for some $p\in(2,\infty)$.  
For clarity of exposition, we assume that $p$ is a power of $2$,
which can be easily resolved, see~\cite[Theorem 1.2]{KPS25}. 
Also, by an application of Hölder's inequality, we may assume that $p \leq \log d$.

We construct the ANN scheme using a doubly-recursive procedure.
The first recursion assumes access to an ANN scheme for $\ell_p^d$
that achieves approximation factor $c_{\text{base}}$, 
and provides a new ANN scheme for $\ell_p^d$
that achieves improved (smaller) approximation $c_{\text{new}}$.
This step crucially relies on access to yet another ANN data structure, 
for $\ell_t^d$, for $t = p/2$, that is actually constructed by the same method.
This leads to a second recursion, of constructing ANN data structures 
for intermediate spaces $\ell_p^d, \ell_{p/2}^d, \ldots, \ell_{2}^d$,
where the space $\ell_2^d$ is known to have ANN data structures with $O(1)$ approximation.

We next describe the first recursion, 
i.e., how to construct an improved $(c_{\text{new}}, r)$-ANN scheme for $\ell_p^d$ 
given a $(c_\text{base}, r)$-ANN scheme for $\ell_p^d$ 
and a $(c_t, r)$-ANN scheme for $\ell_t^d$, where $t < p$.
In the preprocessing phase, use \cref{thm:Sparse-Cover} to construct for $X$ 
a $(\beta, 2 c_\text{base} r)$-cover $\mathcal{S}$ 
with sparsity $\tilde{O}(n^{1+1/\beta})$ for $\beta = \log p$. 
During the construction of $\mathcal{S}$, 
store for every $x \in X$ a reference to a set $S_x\in\mathcal{S}$ that ``covers'' it, 
i.e., $B_{\ell_p}(x,2 c_\text{base} r) \cap X \subseteq S_x$,
which is guaranteed to exist in a sparse cover. 
In addition, for every $S \in \mathcal{S}$ designate (arbitrarily) a center point $y \in S$,
apply a Mazur map $M^{y}: \ell_{p}^{d} \to \ell_{t}^{d}$ 
scaled down by factor $\frac{p}{t} \cdot (2\beta c_\text{base} r)^{p/t-1}$
on $B_{\ell_p}(0,2 \beta c_\text{base} r) \cap (X-y)$,
and construct for these image points a $(c_t, r)$-ANN scheme $A_S$. 
Finally, construct a $(c_\text{base},r)$-ANN scheme $A_\text{base}$ for $X$, 
and amplify the success probabilities of both data structures to $5/6$ 
by the standard method of independent repetitions.
Given a query $q\in \ell_{p}^d$ that is guaranteed to have $x^* \in X$ with $\|x^* - q\|_p \leq r$, 
query $A_{\text{base}}$ for the point $q$ and obtain an answer $x_{\text{base}} \in X$. 
Then find its cluster $S_{x_{\text{base}}}$ and this cluster's designated center $y$, 
query $A_{S_{x_{\text{base}}}}$ for the point $M^{y}(q-y)\in \ell_t^d$, 
and use its answer $M^{y}(z_{\text{out}}-y) \in M^y(X-y)$ 
to output the corresponding $z_{\text{out}} \in X$. 

The next claim is analogous to \cite[Claim 4.1]{KPS25},
and the main difference is using the sparse cover. 

\begin{claim}\label{claim:ANN_Mazur_Map}
With probability at least $2/3$, 
we have $\|z_{\text{out}}-q\|_p \leq c_{\text{new}} r$, 
where $c_{\text{new}}=(\tfrac{p}{t})^{t/p} \ c_t^{t/p} \ (4\beta c_\text{base})^{1-t/p}$.
\end{claim}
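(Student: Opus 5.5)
Condition on both data structures succeeding. $A_{\text{base}}$ and $A_{S}$ each succeed with probability at least $5/6$, so by a union bound both succeed with probability at least $2/3$. Assuming this, $A_{\text{base}}$ returns $x_{\text{base}}$ with $\|x_{\text{base}}-q\|_p\le c_\text{base} r$. Hence $\|x_{\text{base}}-x^*\|_p \le (c_\text{base}+1)r \le 2c_\text{base} r$, so $x^*\in B_{\ell_p}(x_{\text{base}},2c_\text{base} r)\cap X\subseteq S_{x_{\text{base}}}$ by the choice of the covering cluster. Let $S=S_{x_{\text{base}}}$ with center $y$. Since $\operatorname{diam}(S)\le 2\beta c_\text{base} r$, every $z\in S$ satisfies $z-y\in B_{\ell_p}(0,2\beta c_\text{base} r)$. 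So $M^y$ is defined on all of $S-y$, and in particular on $x^*-y$. Strictly, $A_S$ is built on $B_{\ell_p}(0,2\beta c_\text{base} r)\cap(X-y)$, which contains $S-y$, and every point it can return also lies in that ball, which is all we need.

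Next, apply \cref{thm:mazur_map} with $q\to t$ and $C_0 = 2\beta c_\text{base} r$, to the query. The point $q-y$ may lie slightly outside the ball: $\|q-y\|_p\le \|q-x^*\|_p+\|x^*-y\|_p\le r+2\beta c_\text{base} r$. I would handle this in one of two ways. The first is to enlarge $C_0$ by a constant factor; this is absorbed since $4\beta c_\text{base}$ already has a factor-$2$ slack. The second is to apply the Mazur map guarantee only between points of the ball and compare through $x^*$. The cleanest route is to take $C_0$ large enough to contain $q-y$ as well, and to check that the constant $4$ in $c_{\text{new}}$ still works out (using $2C_0$ in the lower bound). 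Then the upper bound gives $\|M^y(q-y)-M^y(x^*-y)\|_t\le \|q-x^*\|_p\le r$. So the query to $A_S$ is a valid $(c_t,r)$-ANN query, and it returns $z_{\text{out}}$ with $\|M^y(q-y)-M^y(z_{\text{out}}-y)\|_t\le c_t r$.

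Finally, the lower bound of \cref{thm:mazur_map} gives
\[
\tfrac{t}{p}(2C_0)^{1-p/t}\|z_{\text{out}}-q\|_p^{p/t}\le c_t r .
\]
Rearranging,
\[
\|z_{\text{out}}-q\|_p\le \bigl(\tfrac{p}{t}c_t r\bigr)^{t/p}(2C_0)^{1-t/p}.
\]
Substituting $2C_0\approx 4\beta c_\text{base} r$, the powers of $r$ combine to $r^{t/p}r^{1-t/p}=r$. This gives exactly $c_{\text{new}}r$.

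The main obstacle is the bookkeeping that ensures $q-y$ and $z_{\text{out}}-y$ lie in the ball where the Mazur map estimate applies, so that the constant comes out as $4\beta c_\text{base}$. I expect this to be resolved by the slack between $2\beta c_\text{base}$ and the radius bound $(2\beta c_\text{base}+1)r$.
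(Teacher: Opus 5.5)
Your argument follows the paper's proof step for step. Both use a union bound over the two $5/6$ events, the triangle inequality $\|x^*-x_{\text{base}}\|_p\le 2c_{\text{base}}r$ to place $x^*$ in $S_{x_{\text{base}}}$, the upper bound of \cref{thm:mazur_map} to make $M^y(q-y)$ a valid query for $A_S$, and the lower bound with $2C_0=4\beta c_{\text{base}}r$ to pull back. You are also right that only $\|q-y\|_p\le(2\beta c_{\text{base}}+1)r$ is known; the paper applies \cref{thm:mazur_map} to $q-y$ without comment.

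However, your proposed repair does not work. There is no factor-$2$ slack in $4\beta c_{\text{base}}$. One factor $2$ is the cover radius $2c_{\text{base}}r$, which the triangle inequality forces. The other is the $2C_0$ in the Mazur lower bound, which you use yourself. Moreover, the construction fixes the scaling of $M^y$ at $C_0=2\beta c_{\text{base}}r$, so "enlarging $C_0$" fails either way:
\begin{itemize}
\item If you keep $M^y$ and invoke \cref{thm:mazur_map} with $C_0'>C_0$, then $M^y$ equals $(C_0'/C_0)^{p/t-1}$ times the map of the theorem. The upper bound becomes $\|M^y(q-y)-M^y(x^*-y)\|_t\le (C_0'/C_0)^{p/t-1}r>r$, so the query is no longer certified valid for a $(c_t,r)$ scheme.
\item If you change the construction to $C_0'=(2\beta c_{\text{base}}+1)r$, you get $4\beta c_{\text{base}}+2$ in place of $4\beta c_{\text{base}}$.
\end{itemize}
Writing ``$2C_0\approx 4\beta c_{\text{base}}r$'' and then concluding ``exactly $c_{\text{new}}r$'' is where your argument slips. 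Your ``second way'' of comparing through $x^*$ does not help either. The point actually submitted to $A_S$ is $M^y(q-y)$, and certifying that query is a Lipschitz bound involving $q-y$.

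Here is what is actually true. The lower bound of \cref{thm:mazur_map} needs no ball hypothesis. It follows coordinatewise from $|\mathrm{sgn}(a)|a|^s-\mathrm{sgn}(b)|b|^s|\ge 2^{1-s}|a-b|^s$ with $s=p/t$, and $C_0$ enters only through the scaling. So your final pull-back step holds with exactly $4\beta c_{\text{base}}$. The only real issue is the upper bound at $q-y$. Applying the theorem with radius $C_0+r$ and rescaling shows that the query is valid at radius $(1+\tfrac{1}{2\beta c_{\text{base}}})^{p/t-1}r$. The loss is genuine: take $y=0$, $x^*=C_0e_1$, $q=(C_0+r)e_1$. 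Strict convexity of $u\mapsto u^{p/t}$ then gives $\|M^y(q-y)-M^y(x^*-y)\|_t>r$, and a $(c_t,r)$ scheme owes nothing on such a query.

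The fix is a small change to the construction, for instance building each $A_S$ for radius $2r$, which is possible since the $\ell_t$ scheme exists for every radius. This multiplies $c_{\text{new}}$ by $2^{t/p}$, which is harmless for \cref{thm:ANN-for-l_p}. It is not, however, ``absorbed'' into the stated constant $4\beta c_{\text{base}}$.
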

\begin{proof}
With probability at least $\frac{5}{6}$, 
the data structure $A_{base}$ outputs a point $x_{\text{base}}$ with $\|x_{\text{base}}-q\|_p \leq c_\text{base} r$. 
Let $S_{x_{\text{base}}}$ be the set in the cover referenced by $x_{\text{base}}$, 
and let $y \in S_{x_{\text{base}}}$ be its designated center point.
Since 
\[
  \|x^*-x_{\text{base}}\|_p\leq \|x^*-q\|_p+\|q-x_{\text{base}}\|_p\leq 2c_\text{base} r,
\]
we get that $x^* \in B_{\ell_p}(x_{\text{base}}, 2c_\text{base} r) \cap X \subseteq S_{x_{\text{base}}}$. 
Observe that by \cref{thm:mazur_map}, $\|M^y(x^*)-M^y(q)\|_t\leq r$,
and thus with probability at least $\frac{5}{6}$, 
querying $A_{S_{x_{\text{base}}}}$ finds a point $M^{y}(z_\text{out}) \in M^{y}(S_{x_\text{base}}) \subseteq M^{y}(X)$ with $\|M^y(z_\text{out})-M^y(q)\|_t\leq c_t r$. 
Applying a union bound, we see that with probability at least $2/3$, both events hold. 
In this case, we have by \Cref{thm:mazur_map}, that
\[
    \frac{t}{p} \cdot {(4\beta c_\text{base} r)^{1-p/t}} \cdot \|z_{\text{out}}-q\|_{p}^{p/t} 
    \leq \|M^{y}(z_{\text{out}})-M^{y}(q)\|_t 
    \leq c_t r,
\] 
and by rearranging, we obtain 
$\|z_{\text{out}}-q\|_p \leq (\tfrac{p}{t})^{t/p} \ c_t^{t/p} \ (4\beta c_\text{base})^{1-t/p} r = c_{\text{new}} r$.
\end{proof}
    
Denote by $\hat{c}_i$ the approximation of the ANN scheme obtained by $i$ applications of \cref{claim:ANN_Mazur_Map},
where the initial ANN scheme is the one from \cite{Chan98}, 
with approximation $\hat{c}_0=\poly(d)$. 
We also denote by $c_t$ the approximation of an ANN scheme for $\ell_t$,
that is constructed by the same method (i.e., recursively),
except that for $\ell_2^d$ we use a $(2,r)$-ANN scheme from \cite{HIM12} 
with $\poly(d\log n)$ query time and $\poly(dn)$ space and preprocessing time. 
Using \cref{claim:ANN_Mazur_Map} with $t=p/2$,
and furthermore applying this recursively $k=\lceil \log(\log \hat{c}_0) \rceil=O(\log\log d)$ times, 
we obtain
\[
    \hat{c}_k 
    \leq \sqrt{8\beta c_{p/2} \hat{c}_{k-1}} 
    \leq \sqrt{8\beta c_{p/2} \sqrt{8\beta c_{p/2}\hat{c}_{k-2}}} 
    \leq \ldots \leq 8\beta c_{p/2}\cdot \hat{c}_0^{1/2^{k}} 
    \leq 16\beta c_{p/2}, 
\]
i.e., this scheme has approximation $c_p = \hat{c}_k \leq 16 \beta c_{p/2}$. 
We can amplify the success probability of this scheme to $1 - \frac{1}{3\log p}$ 
by the standard method of $O(\log\log p) = O(\log\log\log d)$ independent repetitions. 
Now by recursion over $p$ for $\log p$ levels, we get that 
\[
c_p \leq \left( 16 \beta \right)^{\log p} 
     = \left(16 \log p \right)^{\log p} 
     = p^{4 + \log\log p},
\]
and the overall success probability is at least $\tfrac{2}{3}$ by a union bound.

We are left to analyze the query time of the algorithm, and its space and preprocessing time.
Each level of the second recursion makes a total of $k \cdot O(\log\log p) = \tilde{O}(\log\log{d})$ calls 
to an ANN scheme for $\ell_t$, for different intermediate values of $t$. 
Since the $(2,r)$-ANN for $\ell_2$ from \cite{HIM12} has query time $\poly(d \log n)$, 
and recalling that $p\leq \log d$, 
the overall query time is 
$\tilde{O}(\log\log{d})^{\log{p}} \cdot \poly(d\log n)=\poly(d\log n)$. 

To analyze the space and preprocessing time, we prove the following claim.

\begin{claim}\label{claim:Space-Analysis}
  There exists an absolute constant $D>2$ such that  
  when the data structure for $\ell_t^d$, $t=2^i$, is computed on $m$ points, 
  it uses total space and preprocessing time $\Tilde{O}(\log\log d)^{i} \poly(d)O(m)^{D + i/\log p}$.
\end{claim}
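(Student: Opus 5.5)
Induction on $i$, where $t=2^i$ and $i$ runs from $1$ to $\log p$. In the base case $i=1$ (the space $\ell_2^d$) we use the $(2,r)$-ANN scheme of \cite{HIM12}. On $m$ points it uses space and preprocessing time $\poly(dm)$. Write this as $\poly(d)\,O(m)^{D_0}$ for an absolute constant $D_0$. Choosing $D\ge D_0$ makes this at most $\tilde O(\log\log d)^{1}\poly(d)O(m)^{D+1/\log p}$, which is the claimed bound for $i=1$.

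For the inductive step, fix $t=2^i$ with $i\ge 2$. Assume the bound holds for $\ell_{t/2}^d$ on any number of points. Build the $\ell_t^d$ structure on an $m$-point set $Y$ as in the proof: $k=O(\log\log d)$ iterations of the first recursion, each followed by $O(\log\log p)$ independent repetitions. This gives $\tilde O(\log\log d)$ ``levels'' in total. Each level does four things.
\begin{enumerate}
\item It computes a $(\beta,2\hat c r)$-sparse cover $\mathcal S$ via \cref{thm:Sparse-Cover} with $\beta=\log p$. This takes time $O(\beta m^{2+2/\beta})$, which is $\poly(d)\,O(m)^{D}$ once $D\ge 4$, using $\beta\le\log\log d$. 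It stores one pointer per point.
\item It computes Mazur images of each cluster $S\in\mathcal S$, in time $O(d|S|)$.
\item It builds one $\ell_{t/2}^d$ structure $A_S$ on $|S|$ points for each cluster.
\item It calls the previous-level base scheme (plus Chan's structure at the bottom, costing $\poly(d)\tilde O(m)$).
\end{enumerate}
The base scheme is just the previous iteration of the same construction, so the costs of the $k$ iterations add up rather than multiply. The only multiplicative factor per recursion level over $t$ is therefore the $\tilde O(\log\log d)$ count of levels.

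The main point is to bound $\sum_{S\in\mathcal S}\mathrm{cost}(A_S)$. By induction each term is at most $\tilde O(\log\log d)^{i-1}\poly(d)\,O(|S|)^{D+(i-1)/\log p}$. Since $|S|\le m$ and the exponent is at least $1$, we get
\[
\sum_{S}|S|^{D+(i-1)/\log p}\le m^{D-1+(i-1)/\log p}\sum_S|S| \le m^{D-1+(i-1)/\log p}\cdot O(m^{1+1/\beta}).
\]
With $\beta=\log p$ this equals $O(m)^{D+i/\log p}$. So the sparsity exponent $1/\beta$ contributes exactly the additive $1/\log p$ per level of the $t$-recursion. This is the crux: using per-point balls instead would add a full $+1$ to the exponent each level.

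The hidden constants need care. The $O(\cdot)$ in the sparsity, raised to the power $D+(i-1)/\log p$, must not compound badly across levels. I handle this by folding all constants into the $O(m)$ base: the constant $C$ in $O(m)$ is fixed once and is at least the sparsity constant, so $(C m)^{D-1+\dots}\cdot C m^{1+1/\beta}\le (Cm)^{D+i/\log p}$. Each level then contributes a cost of $\tilde O(\log\log d)\cdot[\tilde O(\log\log d)^{i-1}\poly(d)O(m)^{D+i/\log p}+\poly(d)O(m)^{D}]$, which is at most $\tilde O(\log\log d)^{i}\poly(d)O(m)^{D+i/\log p}$. The $\poly(d)$ factor stays the same polynomial throughout, since it is only added, never multiplied, across levels.
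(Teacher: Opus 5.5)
Your proposal is correct and is essentially the paper's argument. Both proceed by induction over the $t$-recursion with the $\ell_2$ scheme as the base case. The cluster structures are bounded via $|S|^{e}\le |S|\,m^{e-1}$, with $e=D+(i-1)/\log p$, together with sparsity $O(m^{1+1/\log p})$; the paper phrases this as each point contributing at most $f(m)/m$. So each level adds only $1/\log p$ to the exponent and one $\tilde O(\log\log d)$ factor, with Chan's structure added on top.

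There is one small bookkeeping slip. Summing $(C|S|)^{e}$ actually gives $(Cm)^{e-1}\cdot C\cdot C m^{1+1/\beta}$, one more factor of $C$ than you display. This per-level constant is harmlessly absorbed into the per-level $\tilde O(\log\log d)$ factor.
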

\begin{proof}
We only analyze the space usage of the data structure; 
the analysis of the preprocessing time follows similarly, 
as it takes $O(d)$ time to compute a Mazur map 
and $O(\log p \cdot m^{2(1+1/\log p)}) \leq O(\log\log d \cdot m^{2(1+1/\log p)})$ time to compute a sparse cover.

The proof proceeds by induction on $i \geq 0$.
For $i=0$, the claim follows because the $(2,r)$-ANN from \cite{HIM12}, 
when computed on $m$ points, 
uses at most $\poly(d)m^{D}$ space for some absolute constant $D>1$. 
Now, assume the claim holds for $i-1 \geq 0$. 
The ANN scheme at level $i$ of the recursion consists of two types of ANN schemes. 
The first type is an ANN scheme from \cite{Chan98} computed on all $m$ points, 
which uses $\poly(d)\tilde{O}(m) \leq \poly(d)O(m)^{D(1+1/\log p)}$ space. 
The second type are multiple ANN schemes at level $i-1$ 
that are computed on different subsets of the $m$ points.  
For every $j=0,1,\ldots,k=\tilde{O}(\log\log d)$, 
let $\mathcal{S}^j$ be the 
sparse cover of sparsity $O(m^{1+1/\log p})$ 
computed for the points at the $j$-th level of the first recursion. 
For every level $j$ of the first recursion and cluster $S\in \mathcal{S}^j$, 
the algorithm computes a data structure of level $i-1$ on $S$. 
By the induction hypothesis, the space of this data structure is bounded by $\Tilde{O}(\log\log d)^{i}\poly(d) O(m)^{D+i/\log p}$. Observe that the function $f:x \mapsto \Tilde{O}(\log\log d)^{i}\poly(d)x^{D + i/\log p}$
satisfies that %
$f(a)+f(b) \leq f(a+b)$ for all $a,b \geq 1$,
and, since every data structure is constructed over some $S \in \mathcal{S}^j$ containing at most $m$ points, we can bound the contribution of every point $p \in S$ to the total space usage of the data structure by $\frac{f(m)}{m}$. Since for every $j$ the sparsity of $\mathcal{S}^j$ is at most $O(m^{1+1/\log p})$, the total space usage of all ANN schemes computed over all clusters of $\mathcal{S}^j$ is at most
\[
O(m^{1+1/\log p})\frac{f(m)}{m}\leq \Tilde{O}(\log\log d)^{i} \poly(d)O(m)^{D+(i+1)/\log p}.
\]
Hence, the total space usage is 
\[
\begin{aligned}
&\Tilde{O}(\log\log d)\cdot\left(\Tilde{O}(\log\log d)^{i}\cdot \poly(d)\cdot O(m)^{D+(i+1)/\log p}\right) + \poly(d)\cdot \Tilde{O}(m) \\
& \leq \Tilde{O}(\log\log d)^{i+1}\cdot \poly(d)\cdot O(m)^{D+(i+1)/\log p},
\end{aligned}
\]
completing the proof.
\end{proof}
Finally, we use \cref{claim:Space-Analysis} for $m=n$ and $i=\log p$, 
and obtain that the space usage of the entire ANN data structure is bounded by 
\[
\Tilde{O}(\log\log d)^{\log p}\poly(d)\cdot n^{D+\log p/\log p} \leq \poly(d) \cdot n^{D+1} \leq \poly(dn),
\]
which completes the proof of \cref{thm:ANN-for-l_p}. 
\end{proof}

\begin{remark}
    Modifying the parameter $\beta$ of the sparse cover in the proof of \cref{thm:ANN-for-l_p} from $\log p$ to $\tfrac{\log p}{\delta}$ for $0<\delta<1$ yields a data structure with a slightly larger approximation factor $p^{O(1)+\log(1/\delta)+\log\log p}$, but space requirement that matches that of \cite{HIM12} up to subpolynomial factors in $d$ and an additional $n^{\delta}$ term.  
\end{remark}

\begin{remark}\label{rem:ANN_General_Norms}
    The same technique used in the proof of \cref{thm:ANN-for-l_p}, namely applying \cref{thm:Sparse-Cover} to construct covers in the preprocessing phase, can also be used to improve the space requirement of ANN for general normed spaces from \cite[Theorem~3]{ANNRW18b}. More specifically, for every $0 < \delta < 1$, one can shave an $\Omega(n^{1-\delta})$ factor from the space of their data structure, at the cost of an additional $O(\delta^{-1})$ factor in the approximation.
\end{remark}

\ifLIPICS
\else
  \paragraph*{Acknowledgments.}
  We thank Shay Sapir and Asaf Petruschka for helpful discussions.
\fi

{\small
  \bibliographystyle{alphaurl}
  \bibliography{references}
} %

\end{document}